\pgfplotsset{compat=newest}
\newtheorem{thm}{Theorem}
\newtheorem{cor}{Corollary}
\newtheorem{lem}{Lemma}
\newtheorem{fact}{Fact}
\newtheorem{rem}{Remark}
\newcommand{\TTT}{\ensuremath{T_{\mathrm{TT}}}}
\newcommand{\TET}{\ensuremath{T_{\mathrm{ET}}}}
\newcommand{\expval}[1]{\ensuremath{\mathbb{E}\!\left\lbrack#1\right\rbrack}}
\newcommand{\varval}[1]{\ensuremath{\mathbb{V}\!\left\lbrack#1\right\rbrack}}
\newcommand{\Prob}{\ensuremath{\mathbb{P}}}
\newcommand*\diff{\mathop{}\!\mathrm{d}}
\title{\LARGE \bf
	Analysis of Time- versus Event-Triggered Consensus for a Single-Integrator Multi-Agent System*
\thanks{*F.\ Allgöwer thanks the German Research Foundation (DFG) for support of this work within grant AL 316/13-2 and within the German Excellence Strategy under grant EXC-2075 - 285825138; 390740016.
	F.\ Aurzada thanks the DFG for support within grant AU 370/7.
	For the cooperation until Jan. 2022, M.\ A.\ Lifshits thanks for the support within grant RFBR 20-51-12004.}}
\author{David Meister$^{1}$, Frank Aurzada$^{2}$, Mikhail A. Lifshits$^{3}$, and Frank Allgöwer$^{1}$%
\thanks{$^{1}$D.\ Meister and F.\ Allgöwer are with the University of Stuttgart, Institute for Systems Theory and Automatic Control, Stuttgart, Germany,\newline
		{\tt\small \{meister,allgower\}@ist.uni-stuttgart.de}}%
\thanks{$^{2}$F.\ Aurzada is with the Technical University of Darmstadt, Germany,
        {\tt\small aurzada@mathematik.tu-darmstadt.de}}%
\thanks{$^{3}$M.\ A.\ Lifshits is with the St. Petersburg State University, Russia, 
    	{\tt\small mikhail@lifshits.org}}%
}
\begin{document}

\maketitle

\begin{abstract}
	It is well known that the employed triggering scheme has great impact on the control performance when control loops operate under scarce communication resources.
	Various practical and simulative works have demonstrated the potential of event-triggered control to reduce communication while providing a similar performance level when compared to time-triggered control.
	For non-cooperative networked control systems, analytical performance comparisons of time- and event-triggered control support this finding under certain assumptions.
	While being well-studied in the non-cooperative setting, it remains unclear if and how the performance relationship of the triggering schemes is altered in a multi-agent system setup.
	To close this gap, in this paper, we consider a homogeneous single-integrator multi-agent consensus problem for which we compare the performance of time- and event-triggered control schemes analytically.
	Under the assumption of equal average triggering rates, we use the long-term average of the quadratic deviation from consensus as a performance measure to contrast the triggering schemes.
	Contrary to the non-cooperative setting, we prove that event-triggered control performs worse than time-triggered control beyond a certain number of agents in this setup.
	In addition, we derive the asymptotic order of the performance measure as a function of the number of agents under both triggering schemes.
\end{abstract}

\begin{keywords}
	Agents-based systems, Networked control systems, Event-triggered control, Consensus.
\end{keywords}

\section{Introduction}

As shown by \cite{Astrom2002}, event-triggered control (ETC) schemes have the potential to outperform time-triggered control (TTC) schemes for single-integrator systems when assuming loss- and delay-free communication channels as well as equal average triggering rates.
While the former only initiate communication when a designed triggering condition is fulfilled, the latter establish communication intervals of fixed length.
The reduction of ``unnecessary'' communication may appear as a valid argument in favor of ETC also being beneficial for imperfect communication channels with limited bandwidth.
The suggestion to use ETC to reduce the shared medium utilization carried over from the field of networked control systems (NCS), e.g., \cite{Henningsson2008,Heemels2008}, to the field of multi-agent system (MAS), see for example \cite{Seyboth2013,Nowzari2019}.
In order to differentiate clearly between NCS that are only coupled through their use of a shared communication medium and MAS in which the agents additionally cooperate on a common goal, we refer to the former as non-cooperative NCS throughout this paper.

For the non-cooperative single-integrator NCS case, \cite{Rabi2009} extended the results from \cite{Astrom2002} to incorporate also network effects such as packet loss in the comparison.
They point out that ETC can perform worse than TTC above a certain packet loss probability.
In \cite{Blind2011} and \cite{Blind2011a}, transmission delays are included into the comparison and the packet loss probability is determined based on the medium access protocol.
At last, \cite{Blind2013} provides a performance comparison of TTC and ETC schemes for single-integrator systems considering various medium access protocols.
They demonstrate the impact of the network load on the performance of the single-integrator NCS depending on the triggering scheme and medium access protocol.
Thereby, they establish the importance of taking the properties of the communication network into consideration when designing triggering schemes for NCS.
Analyzing more general NCS and their behavior under (periodic) ETC schemes is still an active field of research, e.g., \cite{Gleizer2020, Postoyan2022}.

Although some fundamental considerations have shown that TTC can sometimes outperform ETC if network effects are taken into account, ETC is still very popular for NCS.
This also led to various ETC approaches for MAS while there exists no work on the fundamental characteristics of TTC compared to ETC in this case.
As pointed out by \cite{Nowzari2019}, the event-triggered consensus literature is still missing performance analyses that quantify the benefit of ETC over TTC schemes.
This work aims to close this gap in order to understand whether qualitative results are the same for MAS as in the non-cooperative NCS case, or whether new effects might arise.
With the discussed works in mind, we provide a first theoretical evaluation of ETC and TTC performance by analyzing a simple MAS problem.
Our main contribution is the finding that, for this particular setup, ETC is not always superior to TTC even without considering packet loss or transmission delays.
The performance relationship turns out to depend on the number of participating agents.
Moreover, we provide the asymptotic order of the performance measure for ETC and TTC as a function of the number of agents.
This gives further insights into the relationship between ETC and TTC for MAS, considering a particular setup.

Our paper is structured as follows: 
In Section~\ref{sec:setup}, we introduce the setup and formulate the considered problem. 
After that, we present our theoretical results in Section~\ref{sec:analysis}, while we demonstrate our findings in a numerical simulation in Section~\ref{sec:sim}.
We conclude this work in Section~\ref{sec:conclusion}.

\section{Setup and Problem Formulation}\label{sec:setup}

We consider an all-to-all communication graph with $N$ nodes representing the agents.
Therefore, every agent is able to communicate directly with all other agents.
Moreover, each agent acts as a single-integrator perturbed by noise
\begin{equation}\label{eq:agent}
	\mathrm{d}x_i = u_i \mathrm{d}t + \mathrm{d}v_i
\end{equation}
starting in consensus, i.e., initial states $x_i(0)=0$ for all $i\in \lbrace 1,\dots,N\rbrace$, and with $v_i(t)\in \mathbb{R}$ referring to a standard Brownian motion and $u_i(t)\in\mathbb{R}$ to the control input.

Furthermore, we assume that the agents can continuously monitor their own state but trigger discrete transmission events.
As laid out in the introduction, we intend to compare TTC and ETC schemes for triggering transmissions.
For that purpose, we consider the cost functional
\begin{equation}\label{eq:cost}
	J \coloneqq \limsup_{M\to \infty} \frac{1}{M} \int_{0}^{M} \expval{\frac{1}{2}\sum_{i,j=1}^{N}(x_i(t)-x_j(t))^2} \diff t,
\end{equation}
as a performance measure, where we abbreviated $\sum_{i=1}^{N}\sum_{j=1}^{N}$ as $\sum_{i,j=1}^{N}$.
It quantifies the average deviation from consensus with a quadratic cost term.
\begin{rem}
	The cost can also be expressed as 
	\begin{equation*}
		J = \limsup_{M\to \infty} \frac{1}{M} \int_{0}^{M} \expval{x^\top Lx} \diff t,
	\end{equation*} where $x=\lbrack x_1(t), \dots, x_N(t)\rbrack^\top$ and $L$ is the Laplacian of the all-to-all communication graph.
	The quadratic term $x^\top Lx$ is a typical measure for the deviation of an MAS from consensus and, for example, also often used as a Lyapunov function, see, e.g., \cite{Dimarogonas2009}.
\end{rem}
\begin{rem}
	We do not incorporate the triggering rate in the cost since we intend to compare TTC and ETC given equal average triggering rates, cf.\ Section~\ref{sec:ET} and, e.g., \cite{Antunes2020}.
\end{rem}

The agents are controlled with an impulsive control input
\begin{equation}\label{eq:input}
	u_i(t) = \sum_{j\in\mathcal{N}_i} \sum_{k\in\mathbb{N}} \delta(t-t^j_{k})(x_j(t^j_{k})-x_i(t^j_{k})),
\end{equation}
where $\mathcal{N}_i = \lbrace1,\dots,N\rbrace\backslash\lbrace i\rbrace$ is the set of neighbors of agent $i$, $\delta(\cdot)$ refers to the Dirac delta function and $t^j_{k}$ denotes the transmission time of packet $k$ from agent $j$.
Given the performance measure \eqref{eq:cost}, this is the optimal control input given any sampling scheme since there is no cost induced by the control input.
Thus, the system is reset to consensus by transmitting an agent's state to all other agents.
Between those transmission events, the systems behave according to standard Brownian motions.

We consider two different ways of notation regarding the series of transmission events in this work:
On the one hand, we have already introduced the series of triggering time instances $(t^j_k)_{k\in\mathbb{N}}$ for agent $j\in\lbrace1,\dots,N\rbrace$.
On the other hand, we will also refer to the event series of the complete MAS with the notation $(t_k)_{k\in\mathbb{N}}$.
Naturally, one obtains the sequence $(t_k)_{k\in\mathbb{N}}$ by ordering the event series $(t^j_k)_{k\in\mathbb{N}}$ for all agents $j\in\lbrace1,\dots,N\rbrace$ in an increasing fashion.

As a final point, note that we study the same setup as in \cite{Astrom2002} except for the fact that we consider a cooperative control goal.
This will allow us to contrast the results later on.

\section{Main Results}\label{sec:analysis}

In this section, the two triggering schemes are introduced and the related cost according to \eqref{eq:cost} is derived and compared.

\subsection{Preliminaries}\label{sec:pre}
First, we prove some facts about the considered problem that will turn out to be useful for the following analysis of the TTC and ETC schemes.
Similar to \cite{Rabi2009}, we find:
\begin{fact}\label{fact:0T}
	If the sequence of inter-event times is independent and identically distributed, it suffices to evaluate the cost over the first sampling interval $\lbrack0,T\rbrack = \lbrack0,t_1\rbrack$:
	\begin{equation*}
		J = \frac{\expval{\frac{1}{2}\sum_{i,j=1}^{N} \int_{0}^{T} \left(x_i(t)-x_j(t)\right)^2\diff t}}{\expval{T}},
	\end{equation*}
	where $T$ is determined by the respective triggering scheme introduced in Sections~\ref{sec:TT} and \ref{sec:ET}.
\end{fact}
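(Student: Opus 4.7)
My approach is to cast the cost functional as an average reward in a renewal–reward framework, mirroring the argument of \cite{Rabi2009}. The key structural observation is that each triggering event regenerates the MAS in consensus: when agent $j$ transmits at $t^j_k$, the Dirac spike in \eqref{eq:input} applied to each neighbor $i$ carries weight $x_j(t^j_k)-x_i(t^j_k)$, which instantaneously shifts $x_i$ to the value $x_j(t^j_k)$ while $x_j$ itself is unaltered. Consequently, immediately after every pooled event $t_k$ all agents share a common value, and between events they evolve as independent Brownian motions starting from that common value.

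Next I would define the reward on the $k$-th inter-event interval
\begin{equation*}
R_k := \int_{t_{k-1}}^{t_k} f(x(t))\,\diff t, \qquad f(x):=\frac12\sum_{i,j=1}^N (x_i-x_j)^2,
\end{equation*}
together with the interval length $T_k := t_k - t_{k-1}$, where $t_0 := 0$ and $T := T_1 = t_1$. Since $f$ is invariant under a uniform translation of the state vector, $R_k$ depends only on $T_k$ and on the Brownian increments on $[t_{k-1},t_k]$, not on the common reset value inherited from the previous event. Combined with the i.i.d.\ hypothesis on $(T_k)$ and the independent increments of the driving Brownian motions, this forces the pairs $(T_k,R_k)$ to be i.i.d.\ with the same law as $(T,R_1)$ computed on $[0,t_1]$ starting from $x(0)=0$.

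With the regenerative structure in place, I would apply the elementary renewal (reward) theorem. Fix $M>0$ and set $N(M):=\max\{n:t_n\le M\}$. Using Fubini,
\begin{equation*}
\frac{1}{M}\int_0^M \expval{f(x(t))}\,\diff t \;=\; \frac{\expval{\sum_{k=1}^{N(M)} R_k}}{M} \;+\; \frac{\expval{\int_{t_{N(M)}}^M f(x(t))\,\diff t}}{M}.
\end{equation*}
Wald's identity yields $\expval{\sum_{k=1}^{N(M)}R_k}=\expval{R_1}\expval{N(M)}$, and the elementary renewal theorem gives $\expval{N(M)}/M\to 1/\expval{T}$, so the first term tends to $\expval{R_1}/\expval{T}$, which matches the claimed ratio and upgrades the $\limsup$ to an honest limit.

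The delicate point I would expect is the integrability of $R_1$ together with the uniform control of the boundary remainder. Because $R_1$ is essentially the integral of squared Brownian displacements over the random horizon $T$, a bound of the form $\expval{R_1}<\infty$ requires $\expval{T^2}<\infty$. For TTC this is trivial since $T$ is deterministic, whereas for ETC it follows from the explicit level-crossing rule introduced in Section~\ref{sec:ET}; I would therefore defer this scheme-dependent integrability check to the respective subsections, treating it as the only ingredient not already supplied by the renewal argument above.
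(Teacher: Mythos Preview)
Your proposal is correct and follows essentially the same renewal--reward route as the paper: decompose the running integral into i.i.d.\ per-cycle contributions, invoke Wald's equation for the sum up to the renewal counter, and use the elementary renewal theorem $\expval{N(M)}/M\to1/\expval{T}$ to identify the limit. The only cosmetic differences are that the paper works pairwise with $y_k^{(i,j)}$ rather than with the aggregated reward $R_k$, and it bounds the boundary remainder explicitly by $y_{m(M)+1}^{(i,j)}$; your added remarks on the translation invariance of $f$ and the $\expval{T^2}<\infty$ integrability requirement make explicit what the paper leaves implicit.
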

\begin{proof}
	Can be found in the appendix.
\end{proof}

Denoting $Q \coloneqq \expval{\frac{1}{2}\sum_{i,j=1}^{N} \int_{0}^{T} \left(x_i(t)-x_j(t)\right)^2\diff t}$, we are able to rewrite the numerator of the cost as follows.
\begin{fact}\label{fact:Q}
	Let $T$ be a symmetric stopping time, i.e., if one replaces $v_i$ by $-v_i$ for any $i\in\lbrace1,\dots,N\rbrace$ the value of $T$ does not change, as well as independent of the direction, i.e., $T$ does not change if $v_i$ is interchanged with $v_j$ for any $i,j\in\lbrace1,\dots,N\rbrace$.
	Then, given Fact~\ref{fact:0T}, we can establish
	\begin{equation*}
		Q = N(N-1) \expval{\int_{0}^{T} v_1(t)^2 \diff t}.
	\end{equation*}
\end{fact}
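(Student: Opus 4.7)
The plan is to expand the squared differences, exploit the two symmetry hypotheses on $T$ to collapse the resulting terms, and then assemble the result.

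First, I would note that on the interval $[0,T] = [0,t_1]$ no impulsive control action has yet occurred, so the system \eqref{eq:agent} with $x_i(0)=0$ gives $x_i(t) = v_i(t)$ for all $i$ and all $t\in[0,T]$. Thus the integrand inside $Q$ can be replaced by a quadratic form in the Brownian motions $v_i$. A direct algebraic expansion yields
\begin{equation*}
\tfrac{1}{2}\sum_{i,j=1}^{N}(v_i-v_j)^2 = N\sum_{i=1}^{N} v_i^2 - \Bigl(\sum_{i=1}^{N} v_i\Bigr)^{\!2},
\end{equation*}
so that, after squaring out the second sum and exchanging sum and expectation (all terms are nonnegative or absolutely integrable up to the stopping time $T$, which I will justify only briefly),
\begin{equation*}
Q = N\sum_{i=1}^{N}\expval{\int_0^T v_i(t)^2\diff t} - \sum_{i=1}^{N}\expval{\int_0^T v_i(t)^2\diff t} - \sum_{\substack{i,j=1\\i\neq j}}^{N}\expval{\int_0^T v_i(t)v_j(t)\diff t}.
\end{equation*}

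Next I would invoke the two symmetry properties of $T$. The direction-independence (exchangeability in the $v_i$) implies that the joint law of $(v_i, T)$ does not depend on $i$, so each of the diagonal terms equals $\expval{\int_0^T v_1(t)^2\diff t}$. This collapses the first two sums into $N(N-1)\expval{\int_0^T v_1(t)^2\diff t}$, which is already the claimed expression; it only remains to show that all cross terms vanish. For $i\neq j$, the sign-flip symmetry of $T$ together with the fact that $-v_j$ is a standard Brownian motion independent of the remaining $v_k$ implies that the joint law of $(v_i, v_j, T)$ coincides with that of $(v_i, -v_j, T)$. Hence
\begin{equation*}
\expval{\int_0^T v_i(t)v_j(t)\diff t} = \expval{\int_0^T v_i(t)(-v_j(t))\diff t},
\end{equation*}
forcing the expectation to be zero. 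Summing the zero contributions and combining with the diagonal part gives the claim.

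The main obstacle I expect is a clean justification of the sign-flip step: one has to argue that $T$, being a stopping time defined as a (measurable) functional of the paths $(v_1,\dots,v_N)$, transforms covariantly under the replacement $v_j\mapsto -v_j$, so that the stochastic integral $\int_0^T v_i v_j\diff t$ (viewed as a functional of the same paths) changes sign in distribution. The symmetry hypothesis of Fact~\ref{fact:Q} is exactly what makes this pathwise covariance valid, and one should check that both of the triggering rules defined in Sections~\ref{sec:TT} and \ref{sec:ET} do indeed satisfy the hypothesis; otherwise the remainder of the proof is just the algebraic bookkeeping above.
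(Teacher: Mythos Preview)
Your proof is correct and follows essentially the same route as the paper: expand the squared differences, use the sign-flip symmetry of $T$ to annihilate the cross terms $\expval{\int_0^T v_i v_j\,\diff t}$, and use the exchangeability of $T$ to identify all diagonal terms with the one for $v_1$. The only cosmetic difference is that you organize the algebra via the Laplacian identity $\tfrac{1}{2}\sum_{i,j}(v_i-v_j)^2=N\sum_i v_i^2-(\sum_i v_i)^2$ before separating diagonal and off-diagonal contributions, whereas the paper expands each $(v_i-v_j)^2$ directly and then counts occurrences of each $v_i^2$.
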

\vspace*{\belowdisplayskip}
\begin{proof}
	We start with the expression
	\begin{align*}
		Q ={} &\expval{ \frac{1}{2}\sum_{i,j=1}^{N} \int_0^{T} (v_i(t)-v_j(t))^2 \diff t}
		\\
		={} &\expval{ \frac{1}{2}\sum_{\substack{i,j=1:\\i\neq j}}^{N} \int_0^{T} (v_i(t)^2 -2 v_i(t) v_j(t) +v_j(t)^2) \diff t}.
	\end{align*}
	By assumption, the stopping time $T$ is symmetric.
	Observe that the distribution of the random variable $\int_0^{T} v_i(t) v_j(t)\diff t$ is symmetric as well since replacing $v_i$ by $-v_i$ only changes the sign of the integrand. 
	Therefore, the expectation of the mixed term is zero for any $i\neq j$. 
	This shows
	\begin{align*}
		Q ={} &\expval{ \int_0^{T} \sum_{1\leq i < j \leq N}  (v_i(t)^2+v_j(t)^2) \diff t}
		\\
		={}	&\expval{ \int_0^{T} \left( \sum_{i=1}^{N}  v_i(t)^2 (N-i) + \sum_{j=1}^N v_j(t)^2 (j-1) \right) \diff t}
		\\
		={} &\expval{ \int_0^{T} \sum_{i=1}^{N}  v_i(t)^2 (N-i+i-1) \diff t} 
		\\
		={} &(N-1) N \expval{ \int_0^{T}   v_1(t)^2  \diff t},
	\end{align*}
	using that $T$ is independent of the direction.
\end{proof}

\subsection{Time-Triggered Control}\label{sec:TT}
For TTC, periodic transmission events with a constant inter-event time $\TTT = t_{k+1}-t_{k} = \mathrm{const.}$ for all $k\in\mathbb{N}$ are designed for the MAS.
At each event, one agent broadcasts its state to all other agents and, thereby, triggers a reset of the MAS to consensus.
In the analyzed problem setup, the resulting cost does not depend on the choice of the transmitting agent.
This is due to the fact that the cost only considers the relative state error between the agents and all agents instantaneously move to consensus as soon as any agent communicates its state.

Deploying this triggering scheme in the considered setup allows us to arrive at the following theorem.
\begin{thm}\label{thm:cost_TT}
	Suppose agents \eqref{eq:agent} are controlled by the impulsive input \eqref{eq:input} with constant inter-event times $\TTT$.
	Then, the cost \eqref{eq:cost} is given by
	\begin{equation*}
		J_{\mathrm{TT}}(\TTT) = N(N-1) \frac{\TTT}{2}.
	\end{equation*}
\end{thm}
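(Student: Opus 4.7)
The plan is to reduce the statement directly to Facts~\ref{fact:0T} and \ref{fact:Q}, after which only an elementary Brownian motion computation remains. Under TTC the inter-event sequence is constant, $t_{k+1}-t_k=\TTT$, hence trivially i.i.d., so Fact~\ref{fact:0T} applies with $T=\TTT$ and $\expval{T}=\TTT$. Moreover, since $T$ is a deterministic constant it does not depend on the noise paths at all, so both hypotheses of Fact~\ref{fact:Q} (symmetry in $v_i\mapsto -v_i$ and invariance under interchange of $v_i$ and $v_j$) are vacuously satisfied. This reduces the problem to computing
\begin{equation*}
	Q = N(N-1)\,\expval{\int_{0}^{\TTT} v_1(t)^2 \diff t}.
\end{equation*}

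Next, I would use that $v_1$ is a standard Brownian motion started at $0$ (the initial condition $x_i(0)=0$, combined with the reset-to-consensus interpretation of \eqref{eq:input}, means the relevant noise on $[0,\TTT]$ is precisely such a Brownian motion). Therefore $\expval{v_1(t)^2}=t$, and by Fubini/Tonelli
\begin{equation*}
	\expval{\int_{0}^{\TTT} v_1(t)^2 \diff t} = \int_0^{\TTT} t \diff t = \frac{\TTT^{2}}{2}.
\end{equation*}

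Plugging back into Fact~\ref{fact:0T} gives
\begin{equation*}
	J_{\mathrm{TT}}(\TTT) = \frac{Q}{\expval{T}} = \frac{N(N-1)\,\TTT^{2}/2}{\TTT} = N(N-1)\frac{\TTT}{2},
\end{equation*}
which is the claimed expression.

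There is no real obstacle here; the only thing worth double-checking is the justification of applying Fact~\ref{fact:0T}, namely that one may restart the analysis on each interval $[t_k,t_{k+1}]$ because the impulsive input \eqref{eq:input} resets all agents to a common value at each trigger, making the relative dynamics on the next interval an independent copy of those on $[0,\TTT]$. Once that is in hand, the theorem is immediate.
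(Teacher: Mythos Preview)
Your proof is correct and follows essentially the same route as the paper's own proof: apply Facts~\ref{fact:0T} and \ref{fact:Q} (justified because $\TTT$ is a deterministic constant), then use $\expval{v_1(t)^2}=t$ and integrate. If anything, you are slightly more explicit than the paper in spelling out why the hypotheses of Fact~\ref{fact:Q} are satisfied and in invoking Fubini.
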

\vspace*{\belowdisplayskip}
\begin{proof}
	Since the inter-event times $\TTT$ are identical and constant, it suffices to analyze the interval between two transmissions and Facts~\ref{fact:0T} and \ref{fact:Q} hold.
	Thus, we can write \eqref{eq:cost} as $J_{\mathrm{TT}}(\TTT) = Q_\mathrm{TT}(\TTT)/\TTT$ with
	\begin{align*}
		Q_\mathrm{TT}(\TTT) &= N(N-1) \int_{0}^{\TTT} \expval{v_1(t)^2} \diff t \\ 
		&= N(N-1) \int_{0}^{\TTT} t \diff t = N(N-1) \frac{\TTT^2}{2}.
	\end{align*}
\end{proof}

\begin{rem}
	The obtained result for the cost in the TTC case is the same as in \cite{Astrom2002} but scaled by the number of agent pairs $N(N-1)$.
	This is also similar to the results of \cite{Blind2011} and related papers where non-cooperative NCS are analyzed and the cost therefore scales with the number of network participants $N$.
\end{rem}

\subsection{Event-Triggered Control}\label{sec:ET}
In ETC, a triggering condition is continuously evaluated to determine when an agent's state is to be transmitted instead of fixing the inter-event time.
In the literature, it is often argued that this lowers the communication rate while maintaining the same performance level, see, e.g., \cite{Seyboth2013}.
In order to achieve a desirable performance, the triggering condition is designed such that it indicates communication necessity in the considered setup.
For this work, we define
\begin{equation}\label{eq:ET_cond}
	\lvert x_i(t)-x_i(t_{\hat{k}}) \rvert \geq \Delta,
\end{equation}
where $\hat{k} = \max \left\lbrace k\in\mathbb{N} \mid t_{k} \leq t \right\rbrace$ and $\Delta>0$, as the triggering condition.
Comparing the deviation of the current state $x_i(t)$ of agent $i$ from its state at the last event $x_i(t_{\hat{k}})$ to a threshold $\Delta$ is quite common in a distributed setup since each agent can evaluate this condition with its local information, see, e.g., \cite{Dimarogonas2009}.
We choose the same threshold $\Delta$ for all agents since the contribution of each agent's state to the cost is equal.
Note that our triggering condition is analogous to the one used in \cite{Astrom2002}, \cite{Rabi2009}, \cite{Blind2013}.

As in the TTC case, we can again analyze the cost over the first sampling interval by utilizing Facts~\ref{fact:0T} and \ref{fact:Q}.
In the ETC case, the inter-event time for that interval is a stochastic variable and can be defined as a stopping time $\TET = \inf\lbrace t>0 \mid \exists i\in\lbrace1,\dots,N\rbrace: \lvert x_i(t)\rvert = \Delta \rbrace$.
The probabilistic nature of the inter-event times combined with the coupling between the agents do not allow to derive a closed form solution for the cost $J_\mathrm{ET}$ according to \eqref{eq:cost}.
However, we are also not primarily interested in the explicit cost, but rather in the relationship to the TTC cost from Theorem~\ref{thm:cost_TT}.
For a fair comparison, we choose $\TTT=\expval{\TET}$ which results in the same average number of transmissions for both triggering schemes.

By establishing the following fact, we can concentrate on $\Delta=1$ for the derivations to come.
\begin{fact}\label{fact:scaling}
	We can show that the following scaling relationships hold true:
	\begin{align*}
		Q_\mathrm{ET}(\Delta=\delta) &= \delta^4 Q_\mathrm{ET}(\Delta=1), \\
		\expval{\TET\mid\Delta=\delta} &= \delta^2 \expval{\TET\mid\Delta=1}, \\
		\varval{\TET\mid\Delta=\delta} &= \delta^4 \varval{\TET\mid\Delta=1},
	\end{align*}
	where $Q_\mathrm{ET}$ refers to $Q$ for the event-triggered case, i.e., with upper integral limit $\TET$, and $\mathbb{V}[\cdot]$ denotes the variance.
\end{fact}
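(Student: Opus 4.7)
The plan is to reduce everything to the Brownian scaling identity: if $v_1,\dots,v_N$ are independent standard Brownian motions, then $\tilde v_i(s) \coloneqq \delta^{-1} v_i(\delta^2 s)$ are again independent standard Brownian motions. Since on the first inter-event interval $x_i(t) = v_i(t)$ (started from consensus), the triggering time with threshold $\delta$ can be rewritten as
\begin{equation*}
T_{\mathrm{ET}}^{(\delta)} = \inf\bigl\{ t>0 \,\big|\, \exists i: |v_i(t)| = \delta \bigr\} = \inf\bigl\{ t>0 \,\big|\, \exists i: |\tilde v_i(t/\delta^2)| = 1 \bigr\}.
\end{equation*}
Substituting $s = t/\delta^2$ shows that $T_{\mathrm{ET}}^{(\delta)}$ equals $\delta^2$ times the hitting time of the unit level by the family $(\tilde v_i)$, which is distributionally identical to $T_{\mathrm{ET}}^{(1)}$. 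Hence $T_{\mathrm{ET}}^{(\delta)} \stackrel{d}{=} \delta^2 T_{\mathrm{ET}}^{(1)}$.

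From this distributional identity, the expectation and variance scalings are immediate: $\expval{T_{\mathrm{ET}}^{(\delta)}} = \delta^2 \expval{T_{\mathrm{ET}}^{(1)}}$ and $\varval{T_{\mathrm{ET}}^{(\delta)}} = \delta^4 \varval{T_{\mathrm{ET}}^{(1)}}$. For $Q_{\mathrm{ET}}$, I would first note that $T_{\mathrm{ET}}$ is clearly symmetric (the condition $|v_i|=\delta$ is invariant under $v_i \mapsto -v_i$) and permutation-invariant in the agents, so Fact~\ref{fact:Q} applies and gives
\begin{equation*}
Q_{\mathrm{ET}}(\delta) = N(N-1)\, \expval{\int_0^{T_{\mathrm{ET}}^{(\delta)}} v_1(t)^2 \diff t}.
\end{equation*}
Then, using the coupling $v_1(t) = \delta\, \tilde v_1(t/\delta^2)$ together with the change of variable $t = \delta^2 s$, the integrand contributes a factor $\delta^2$ from $v_1^2$ and another factor $\delta^2$ from $\diff t$, while the upper limit becomes $T_{\mathrm{ET}}^{(1)}$ (expressed in the tilded Brownian motions). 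This yields $Q_{\mathrm{ET}}(\delta) = \delta^4 Q_{\mathrm{ET}}(1)$.

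The only subtle point, and the step I would be most careful about, is ensuring that the scaling is applied jointly to all $N$ Brownian motions with the \emph{same} factor $\delta$, so that the joint law of $(\tilde v_1,\dots,\tilde v_N)$ matches the law of $(v_1,\dots,v_N)$ in the unit-threshold problem; otherwise the stopping time $T_{\mathrm{ET}}^{(1)}$ on the right-hand side would not correspond to the quantity defined in the statement. Once this coupling is set up consistently, the three scalings follow by the same one-line change of variable.
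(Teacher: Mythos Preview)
Your proposal is correct and follows essentially the same route as the paper: both arguments rest on the Brownian scaling identity $(\delta^{-1} v_i(\delta^2 s))_i \stackrel{d}{=} (v_i(s))_i$ together with a linear change of variable in the integral, and the paper likewise proves the $Q_{\mathrm{ET}}$ scaling in detail and remarks that the remaining two identities follow by the same substitution. Your explicit coupling via $\tilde v_i$ and the observation that $T_{\mathrm{ET}}^{(\delta)} = \delta^2\,\tilde T^{(1)}$ pathwise is a slightly cleaner packaging of the same computation, and your caveat about applying the scaling jointly to all $N$ Brownian motions is exactly the point needed to make the argument rigorous.
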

\begin{proof}
	Can be found in the appendix.
\end{proof}

Without loss of generality, we will thus concentrate on $\Delta=1$ for the remainder of this section.
Before deriving the main result of this section, we need to analyze the asymptotic order of the moments of $\TET$.

\begin{lem}\label{lem:asymp_order}
	For $\Delta=1$, we have
	\begin{align} %
		\expval{\TET} &\sim \frac{1}{2\ln N}, \label{eq:expTET} \\
		\expval{\TET^2} &\sim \frac{1}{(2\ln N)^2}, \\
		\varval{\TET} &\sim \frac{\pi^2/24}{ (\ln N)^4}, \label{eq:varTET}
	\end{align}
	where $a_N \sim b_N$ means that $\lim_{N\to \infty}a_N/b_N = 1$ for arbitrary series $(a_N)_{N\in\mathbb{N}}$, $(b_N)_{N\in\mathbb{N}}$.
\end{lem}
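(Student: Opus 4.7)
The plan is to recognize that before the first transmission event no control has acted, so $x_i(t) = v_i(t)$ on $[0, \TET]$, and the $N$ agents are independent standard Brownian motions. Consequently $\TET = \min_{i=1,\dots,N} \tau_i$, where $\tau_i := \inf\{t>0 : |v_i(t)| = 1\}$ are i.i.d.\ exit times of standard Brownian motion from $[-1,1]$. Writing $F(t) := \Prob(\tau_1 \leq t)$, I then have $\Prob(\TET > t) = (1-F(t))^N$, and the three desired moments reduce to integrals of the form $k\int_0^\infty t^{k-1}(1-F(t))^N \diff t$.

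The analytic input I would rely on is the small-time tail of $F$. Applying the reflection principle at each of the barriers $\pm 1$ and noting that in time $t\to 0^+$ the two exit events are essentially disjoint, one gets $F(t)\sim 4\,\Prob(v_1(t)\geq 1)$, and Mills' ratio then yields
\begin{equation*}
F(t) \;\sim\; \frac{4\sqrt{t}}{\sqrt{2\pi}}\,e^{-1/(2t)}, \qquad t\to 0^+.
\end{equation*}
This fixes the correct scale: solving $NF(t)\asymp 1$ gives $1/(2t) = \ln N - \tfrac12\ln\ln N + O(1)$, so $\TET \asymp 1/(2\ln N)$.

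To obtain the sharp variance I would pass to $U_N := 1/(2\TET)$ and center by a shift $a_N = -\tfrac12\ln\ln N + O(1)$ chosen to absorb the $\ln\ln N$-correction. Substituting the small-$t$ asymptotic into $\Prob(U_N - \ln N - a_N \leq z) = (1 - F(1/(2(\ln N + a_N + z))))^N$ shows that $U_N - \ln N - a_N$ converges in distribution to a standard Gumbel random variable $G$, for which $\expval{G} = \gamma$ and $\varval{G} = \pi^2/6$. A Taylor expansion
\begin{equation*}
\TET = \frac{1}{2U_N} = \frac{1}{2\ln N}\left(1 - \frac{U_N-\ln N}{\ln N} + \frac{(U_N-\ln N)^2}{\ln^2 N} - \cdots\right)
\end{equation*}
then yields $\expval{\TET}\sim 1/(2\ln N)$ and $\expval{\TET^2}\sim 1/(2\ln N)^2$ directly, while the linear term combined with $\varval{G}=\pi^2/6$ produces
\begin{equation*}
\varval{\TET} \;\sim\; \frac{1}{(2\ln N)^2}\cdot\frac{\varval{G}}{(\ln N)^2} \;=\; \frac{\pi^2/24}{(\ln N)^4}.
\end{equation*}

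The main obstacle I expect is not the weak convergence to Gumbel but upgrading it to convergence of the first two moments, since the variance claim depends on the cancellation of the leading $1/(2\ln N)^2$ pieces of $\expval{\TET^2}$ and $\expval{\TET}^2$, and the $\tfrac12\ln\ln N$ shift must be tracked carefully through the expansion. I would obtain the required uniform integrability by establishing two-sided bounds $c_1 \sqrt{t}\,e^{-1/(2t)} \leq F(t) \leq c_2 \sqrt{t}\,e^{-1/(2t)}$ on a uniform neighborhood of zero, together with the trivial estimate $F(t)\geq F(t_0) > 0$ for $t\geq t_0$; these give exponential tails for $U_N - \ln N$ uniformly in $N$ and thereby control the contributions from the tail regions in the integrals $k\int t^{k-1}(1-F(t))^N\diff t$.
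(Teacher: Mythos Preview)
Your proposal is correct and follows essentially the same route as the paper: both identify the small-time tail $F(t)\sim c\sqrt{t}\,e^{-1/(2t)}$, derive a Gumbel limit theorem for the appropriately centered and scaled minimum, upgrade weak convergence to convergence of the first two moments via uniform-integrability/dominated-convergence arguments, and then read off the three asymptotics, with the variance coming from $\varval{G}=\pi^2/6$. The one technical difference is that you pass to the reciprocal $U_N=1/(2\TET)$ and Taylor-expand back, whereas the paper establishes the limit theorem directly for $\TET$ in the form $2(\ln N)^2(\TET-a_N)\Rightarrow G$ with an explicit centering sequence $a_N$; the paper's direct linear centering spares you from having to control the higher-order Taylor remainders, but your route is otherwise equivalent and your plan for the moment-convergence step (two-sided bounds on $F$ near zero plus a crude bound away from zero) is exactly what is needed in either version.
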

\begin{proof}
	Let $T_j \coloneqq \inf\lbrace t>0 : \lvert x_j(t)\rvert = 1 \rbrace$ for all $j\in\lbrace1,\dots,N\rbrace$ and, thus, $\TET = \inf_{1\leq j\leq N} T_j$.
	Using the tail behavior derived from \cite{Moerters2010}, Theorem~7.45,
	\begin{align*}
		\Prob(T_j\le w) &= \Prob( \sup_{0\leq t\leq w} |v_j(t)| \geq 1 )
		\\
		&= \Prob( \sup_{0\leq t\leq 1} |v_j(t)| \geq w^{-1/2} )
		\\
		\overset{w\to 0}&{\sim} \frac{\kappa}{w^{-1/2}}\, \exp( -w^{-1}/2), %
	\end{align*}
	for $\kappa=\sqrt{2/\pi}$, and the independence of the exit times $T_j$, one can derive the limit theorem
	\begin{equation} \label{eq:limittheorem}
		2(\ln N)^2 \left({\TET}- a_N\right) \Rightarrow G,  \qquad \textrm{as } N\to\infty,
	\end{equation}
	with 
	\begin{equation*}
		a_N\coloneqq\frac{1}{2\ln N}-\frac{\ln \frac{\kappa}{(2\ln N)^{1/2}}}{2(\ln N)^2},
	\end{equation*}
	and where $\Rightarrow$ stands for convergence in distribution.
	Moreover, $G$ is a Gumbel-distributed random variable,
	\begin{equation*}
		\Prob( G\ge r) = \exp( - \exp(r)).
	\end{equation*}
	Equation \eqref{eq:limittheorem} can be derived from \cite{Galambos1978}, Theorem~2.1.6. 
	A direct proof is given here: Indeed, for any $r\in\mathbb{R}$, we have
	\begingroup		%
	\allowdisplaybreaks
	\begin{align*}
		&\Prob( 2(\ln N)^2 \left({\TET}- a_N\right)\geq r)
		\\
		={} &\Prob( {\TET} \geq \frac{r}{2(\ln N)^2} + a_N)
		\\
		={}& \Prob( \forall j=1,\ldots, N : T_j \geq \frac{r}{2(\ln N)^2} + a_N)
		\\
		={}& \Prob( T_1 \geq \frac{r}{2(\ln N)^2} + a_N)^N
		\\
		={}& \left( 1 - \Prob( T_1 < \frac{r}{2(\ln N)^2} + a_N )\right)^N
		\\
		\sim{}&	\left(1 - c_N \exp\!\left( -\frac{1}{2} \left( \frac{r -\ln c_N}{2(\ln N)^2}+ \frac{1}{2\ln N} \right)^{-1} \right) \right)^N
		\\
		={}& \left(1-c_N \exp\!\left( -\ln N \left( \frac{r -\ln  \frac{\kappa}{(2\ln N)^{1/2}}}{\ln N}+ 1 \right)^{-1} \right) \right)^N 
		\\
		\sim{}& \left(1-c_N \exp\!\left( -\ln N \left(1- \frac{r -\ln  \frac{\kappa}{(2\ln N)^{1/2}}}{\ln N} \right) \right) \right)^N
		\\
		={}& \left(1 - c_N \frac{1}{N} \exp( r  -\ln c_N) \right)^N
		\\
		={}& \left(1 - \frac{1}{N} \exp( r ) \right)^N  \sim  e^{-e^r},
	\end{align*}
	\endgroup
	as required and with $c_N=\kappa/(2\ln N)^{1/2}$.
	
	The limit theorem \eqref{eq:limittheorem} is accompanied by the convergence of the first and second moment.
	The proof for this is omitted due to space limitations.
	It builds upon Lebesgue's dominated convergence theorem where we need to show that $\Prob( 2(\ln N)^2 \left({\TET}- a_N\right)\geq r)$ and $2r\Prob( 2(\ln N)^2 \left({\TET}- a_N\right)\geq r)$ are upper bounded by integrable functions.
	
	Taking expectations in \eqref{eq:limittheorem} gives
	\begin{equation*}
		2 (\ln N)^2 ( \expval{\TET} - a_N) \to \expval{G}.
	\end{equation*}
	This shows
	\begin{align}
		\expval{\TET} &= a_N + \frac{\expval{G}}{2 (\ln N)^2} (1+o(1)) \label{eq:expTET_1}\\
		&= \frac{1}{2\ln N} + \mathcal{O}\!\left( \frac{\ln \ln N}{(\ln N)^2}\right). \nonumber
	\end{align}
	Similarly, taking second moments in \eqref{eq:limittheorem} gives 
	\begin{equation*}
		4 (\ln N)^4 \expval{( \TET - a_N)^2} \to \expval{G^2}.
	\end{equation*}
	This shows
	\begin{equation*}
		\expval{\TET^2} - 2 a_N \expval{\TET} + a_N^2 = \frac{\expval{G^2}}{4(\ln N)^4} (1+o(1)),
	\end{equation*}
	which, together with \eqref{eq:expTET_1}, yields
	\begin{align*}
		\expval{\TET^2} ={}& a_N^2 + 2 a_N \frac{\expval{G}}{2 (\ln N)^2} (1+o(1)) \\
		&+\frac{\expval{G^2}}{4(\ln N)^4} (1+o(1)) \\
		={}& \frac{1}{4 (\ln N)^2} + \mathcal{O}\!\left( \frac{1}{(\ln N)^3}\right).
	\end{align*}
	Finally, the limit theorem can be re-written as
	\begin{equation*}
		2 (\ln N)^2 ( \TET - \expval{\TET}) + 2(\ln N)^2 ( \expval{\TET} - a_N) \Rightarrow G.
	\end{equation*}
	Squaring, taking expectations, and dividing by $4(\ln N)^4$ gives 
	\begin{equation*}
		\expval{( \TET - \expval{\TET})^2}  +  ( \expval{\TET} - a_N)^2 = \frac{\expval{G^2}}{4(\ln N)^4} (1+o(1)).
	\end{equation*}
	This implies
	\begin{align*}
		\varval{\TET} &= \frac{\expval{G^2}}{4(\ln N)^4} (1+o(1)) -   ( \expval{\TET} - a_N)^2 \\
		&= \frac{\expval{G^2}}{4(\ln N)^4} (1+o(1)) -   \frac{\expval{G}^2}{4 (\ln N)^4} (1+o(1)) \\
		&= \frac{\varval{G}}{4 (\ln N)^4} (1+o(1)),
	\end{align*}
	which proves \eqref{eq:varTET} because $\varval{G}=\pi^2/6$.
\end{proof}

Note that we have shown a dependence of the moments of $\TET$ on the number of agents.
This is an important difference to the non-cooperative NCS case, e.g., in \cite{Astrom2002}, \cite{Rabi2009}, \cite{Blind2013}, and caused by the coupling between the agents in the cost.
Building upon Lemma~\ref{lem:asymp_order}, we arrive at the following theorem.

\begin{thm}\label{thm:cost_ET}
	Suppose agents \eqref{eq:agent} are controlled by the impulsive input \eqref{eq:input} with inter-event times $\TET = \inf\lbrace t>0 \mid \exists i\in\lbrace1,\dots,N\rbrace: \lvert x_i(t)\rvert = \Delta \rbrace$.
	Then, there exists an $N_0$ such that for all $N\geq N_0$, we have
	\begin{equation*}
		J_\mathrm{ET} > J_\mathrm{TT}(\expval{\TET}),
	\end{equation*}
	where we denote by $J_\mathrm{TT}(\expval{\TET})$ the cost under constant inter-event times $\TTT=\expval{\TET}$.
\end{thm}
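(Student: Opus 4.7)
The plan is to reformulate the inequality in terms of moments of $v_1(\TET)$ via It\^o's formula, and then exploit the independence structure of the agents to extract a positive lower-order term. By Facts~\ref{fact:0T} and~\ref{fact:Q}, $J_\mathrm{ET} = N(N-1)\expval{\int_{0}^{\TET} v_1(t)^2 \diff t}/\expval{\TET}$; since $|v_1(t)|\le 1$ on $[0,\TET]$ (no agent has yet hit the threshold), the It\^o decompositions of $v_1^2$ and $v_1^4$ are uniformly bounded martingales, and optional stopping gives
\begin{equation*}
    \expval{v_1(\TET)^2}=\expval{\TET},\qquad \expval{v_1(\TET)^4}=6\,\expval{\int_{0}^{\TET} v_1(t)^2 \diff t}.
\end{equation*}
Together with Theorem~\ref{thm:cost_TT}, this recasts the claim as $\expval{v_1(\TET)^4} > 3\,\expval{\TET}^2$, equivalently $\varval{v_1(\TET)^2} > 2\,\expval{\TET}^2$.

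I would then introduce $\tilde T \coloneqq \min_{j\ge 2} T_j$, which depends only on $v_2,\dots,v_N$ and is therefore independent of $v_1$. On the symmetric event $\{T_1=\TET\}$ (probability $1/N$) we have $v_1(\TET)^4=1$; on its complement, $\TET=\tilde T$. Conditioning on $\tilde T$ yields
\begin{equation*}
    \expval{v_1(\TET)^4} = \tfrac{1}{N} + \expval{g_4(\tilde T)},\quad g_4(t) \coloneqq \expval{v_1(t)^4\,\mathbbm{1}_{\{T_1>t\}}} = 3t^2 - \expval{v_1(t)^4\,\mathbbm{1}_{\{T_1\le t\}}}.
\end{equation*}

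Next, the error term is controlled: Cauchy-Schwarz with $\expval{v_1(t)^8}=105\,t^4$ and the small-$t$ tail $\Prob(T_1\le t)\sim \kappa\,t^{1/2} e^{-1/(2t)}$ from Lemma~\ref{lem:asymp_order} give $\expval{v_1(t)^4\,\mathbbm{1}_{\{T_1\le t\}}} = O(t^{9/4} e^{-1/(4t)})$. Evaluated on the typical window $\tilde T\asymp (\ln N)^{-1}$ this is $N^{-1/2+o(1)}(\ln N)^{-9/4}$; the rare event $\tilde T \gg (\ln N)^{-1}$, whose probability decays faster than any polynomial since $\Prob(\tilde T > c/\ln N) = \Prob(T_1 > c/\ln N)^{N-1}$, is absorbed with the crude bound $g_4(t)\le 3t^2$. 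After integration, $\expval{g_4(\tilde T)} = 3\,\expval{\tilde T^2} - o((\ln N)^{-4})$. Since $\tilde T\ge \TET$, $\expval{\TET^2}=\expval{\TET}^2+\varval{\TET}$, and $\varval{\TET}\sim (\pi^2/24)(\ln N)^{-4}$ by Lemma~\ref{lem:asymp_order}, we conclude
\begin{equation*}
    \expval{v_1(\TET)^4} - 3\,\expval{\TET}^2 \;\ge\; 3\,\varval{\TET} - o\bigl((\ln N)^{-4}\bigr) > 0
\end{equation*}
for all $N$ sufficiently large.

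The main obstacle is the error control: the margin uncovered at the end is only of order $(\ln N)^{-4}$, three logarithmic orders below the leading $(\ln N)^{-2}$-scale of $\expval{\TET}^2$, so the bound on $\expval{v_1(t)^4\,\mathbbm{1}_{\{T_1\le t\}}}$ must decay strictly faster than $(\ln N)^{-4}$. Fortunately, the exponential factor $e^{-1/(4t)}$ evaluated at $t\asymp (\ln N)^{-1}$ produces polynomial-in-$N$ decay that comfortably dominates every logarithmic power, so the argument closes.
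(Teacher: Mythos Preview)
Your argument is correct, and at its core it is the same as the paper's: introduce the auxiliary stopping time $\tilde T=\min_{j\ge 2}T_j$ (the paper calls it $\tau$), exploit its independence from $v_1$, use symmetry to get $\Prob(T_1=\TET)=1/N$, apply Cauchy--Schwarz to show the correction is $O(N^{-1/2})$, and finally compare this to $\varval{\TET}\sim(\pi^2/24)(\ln N)^{-4}$ from Lemma~\ref{lem:asymp_order}. The difference lies in the packaging. The paper works directly with the time integral and uses the elementary identity $\expval{\int_0^\tau v_1(t)^2\,\diff t}=\expval{\tau^2}/2$ (which follows immediately from independence and $\expval{v_1(t)^2}=t$), then bounds the defect $\int_0^\tau v_1^2\,\mathds{1}_{\tau\ne\TET}$ by $C\,N^{-1/2}$ with $C=\expval{(\int_0^{T_2}v_1^2\,\diff t)^2}^{1/2}$, a dimension-free constant obtained in one line. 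Your It\^o step transforming the problem into $\expval{v_1(\TET)^4}>3\,\expval{\TET}^2$ is a valid and somewhat elegant reformulation, but it forces you into a more elaborate error analysis (splitting $\tilde T$ into a typical window and a tail, and appealing to small-time asymptotics of $\Prob(T_1\le t)$). That machinery is not needed: the paper's single Cauchy--Schwarz against $\int_0^{T_2}v_1^2\,\diff t$ avoids all tail splitting. Indeed, even within your framework you could shortcut by noting $\expval{v_1(\tilde T)^4\,\mathds{1}_{T_1\le\tilde T}}\le\expval{v_1(\tilde T)^8}^{1/2}N^{-1/2}\le(105\,\expval{T_2^4})^{1/2}N^{-1/2}$ directly, without any window decomposition. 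So: same proof, with an unnecessary but harmless It\^o detour and error control that is more intricate than required.
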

\begin{proof}
	We can once more concentrate on $\Delta=1$ due to Fact~\ref{fact:scaling}.
	As before, let $T_j \coloneqq \inf\lbrace t>0 : \lvert x_j(t)\rvert = 1 \rbrace$ for all $j\in\lbrace1,\dots,N\rbrace$ and, thus, $\TET = \inf_{1\leq j\leq N} T_j$.
	Moreover, let $\tau \coloneqq \inf_{2\le j\le N} T_j \ge {\TET}$.
	
	The key estimate is
	\begin{equation} \label{e10}
		\int_0^{\TET} v_1(t)^2 \diff t 
		\ge   \int_0^\tau v_1(t)^2 \diff t - \int_0^\tau v_1(t)^2 \diff t \ \mathds{1}_{\tau\not ={\TET}},  
	\end{equation}
	where $\mathds{1}$ denotes the indicator function.
	
	Let us evaluate the expectations. 
	By independence of $\tau$ and $v_1$, we have
	\begin{align}
		\expval{\int_0^\tau v_1(t)^2 \diff t} &=  \int_0^\infty \expval{ \mathds{1}_{t\leq \tau} v_1(t)^2} \diff t \nonumber \\
		&= \int_0^\infty \expval{ \mathds{1}_{t\leq \tau}} \expval{ v_1(t)^2} \diff t \nonumber \\
		&=\expval{\int_0^\tau t \diff t} = \frac{\expval{\tau^2}}{2} > \frac{\expval{\TET^2}}{2}, \label{e11}
	\end{align}
	while since $\tau\leq T_2$ and using the Cauchy-Schwarz inequality
	\begingroup		%
	\allowdisplaybreaks
	\begin{align}
		&\expval{ \int_0^\tau v_1(t)^2 \diff t \cdot \mathds{1}_{\tau\not ={\TET}} } \nonumber\\
		\le{}& \expval{ \int_0^{T_2} v_1(t)^2 \diff t \cdot \mathds{1}_{\tau\not ={\TET}} }	\nonumber\\
		\le{}& \expval{ \left(\int_0^{T_2} v_1(t)^2 \diff t \right)^2 }^{1/2} \cdot \expval{ \mathds{1}^2_{\tau\not ={\TET}} }^{1/2} \nonumber \\
		={}& C \cdot \Prob (\tau\not ={\TET})^{1/2} = C \, N^{-1/2}, \label{e12}
	\end{align}
	\endgroup
	where $C=\mathbb{E}[ (\int_0^{T_2} v_1(t)^2 \diff t)^2 ]^{1/2}$ does not depend on the dimension $N$. 
	The last step holds because $\tau\neq\TET$ if and only if the process $v_1$ is the first to exit $[-1,1]$. 
	By symmetry, this has probability equal to $1/N$.
	Putting \eqref{e10}, \eqref{e11}, and \eqref{e12} together, one obtains
	\begin{align}
		\expval{ \int_0^{\TET} v_1(t)^2 \diff t } &> \frac{\expval{\TET^2}}{2} -  C \, N^{-1/2} \nonumber\\
		&= \frac{\expval{\TET}^2}{2} + \frac{\varval{\TET}}{2}  -  C \, N^{-1/2}. \label{e13}
	\end{align}
	Next, the definition of the limit shows that \eqref{eq:varTET} implies
	\begin{equation*}
		\frac{\varval{\TET}}{(\pi^2/24)/(\ln N)^4} > \frac{1}{2}
	\end{equation*}
	for all $N\geq N_1$. 
	Furthermore, let $N_2$ be such that $\frac{1}{4}\cdot \frac{\pi^2/24}{(\ln N)^{4}} -  C \, N^{-1/2} > 0$ for all $N\geq N_2$ and set $N_0\coloneqq\max(N_1,N_2)$. 
	Plugging the inequalities into \eqref{e13}, we see that for $N\geq N_0$
	\begin{align*}
		\frac{1}{N(N-1)}\, Q_\mathrm{ET} &=  \expval{ \int_0^{\TET} v_1(t)^2 \diff t }
		\\
		&>
		\frac{\expval{{\TET}}^2}{2} + \frac{1}{4}\cdot \frac{\pi^2/24}{(\ln N)^{4}} -  C \, N^{-1/2}
		\\
		&> \frac{\expval{{\TET}}^2}{2}
		=  \frac{1}{N(N-1)}\, Q_\mathrm{TT}(\expval{\TET}),
	\end{align*}
	where we also used Fact~\ref{fact:Q} in the first step and Theorem~\ref{thm:cost_TT} in the last step. 
	Multiplying both sides with $N(N-1)/\expval{\TET}$ gives the desired inequality.
\end{proof}

Thus, we are able to prove that ETC is not necessarily superior to TTC if we move to a setup with cooperative agents.
It is therefore not possible to simply transfer the result from \cite{Astrom2002}, that ETC schemes outperform TTC, to this very similar setup.
With the previous results, we can also derive the asymptotic order of the performance measure in the following corollary.

\begin{cor}
	The asymptotic order of the cost \eqref{eq:cost} as a function of the number of agents under both triggering schemes can be expressed as
	\begin{equation*}
		J_\mathrm{ET} \sim J_\mathrm{TT}(\expval{\TET}) \sim \frac{N(N-1)}{4\ln N}.
	\end{equation*}
\end{cor}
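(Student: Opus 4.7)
The plan is to handle the two asymptotic equivalences in sequence and show both collapse to $\frac{N(N-1)}{4\ln N}$. The right-hand equivalence $J_\mathrm{TT}(\expval{\TET})\sim\frac{N(N-1)}{4\ln N}$ is essentially free: Theorem~\ref{thm:cost_TT} gives $J_\mathrm{TT}(\expval{\TET}) = \frac{N(N-1)}{2}\expval{\TET}$, and substituting the Lemma~\ref{lem:asymp_order} asymptotic $\expval{\TET}\sim\frac{1}{2\ln N}$ yields the claim.

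The event-triggered side requires a sandwich argument. Using Facts~\ref{fact:0T} and~\ref{fact:Q} I would first rewrite
\begin{equation*}
	J_\mathrm{ET} = \frac{N(N-1)\,A_N}{\expval{\TET}}, \qquad A_N \coloneqq \expval{\int_0^{\TET} v_1(t)^2 \diff t},
\end{equation*}
so that the task reduces to showing $A_N \sim \frac{1}{8(\ln N)^2}$; combined with $\expval{\TET}\sim\frac{1}{2\ln N}$ this then immediately gives $J_\mathrm{ET}\sim\frac{N(N-1)}{4\ln N}$, and in particular $J_\mathrm{ET}\sim J_\mathrm{TT}(\expval{\TET})$. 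The lower bound $A_N > \frac{\expval{\TET^2}}{2} - C N^{-1/2}$ is already extracted inside the proof of Theorem~\ref{thm:cost_ET}; since Lemma~\ref{lem:asymp_order} gives $\expval{\TET^2}\sim\frac{1}{4(\ln N)^2}$ while the $N^{-1/2}$ correction decays much faster, this delivers $A_N \gtrsim \frac{1}{8(\ln N)^2}$. For a matching upper bound I would exploit $\TET \leq \tau$ with $\tau = \inf_{2\leq j\leq N} T_j$; since $\tau$ is independent of $v_1$, the Fubini computation already carried out in \eqref{e11} yields $A_N \leq \expval{\tau^2}/2$.

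The only substantive step left, and the one I expect to be the main obstacle, is showing $\expval{\tau^2} \sim \frac{1}{4(\ln N)^2}$, because Lemma~\ref{lem:asymp_order} is phrased only for $\TET$. I would handle this by observing that $\tau$ is distributionally the first exit time of an $(N-1)$-agent instance of the same problem, so re-running the proof of Lemma~\ref{lem:asymp_order} with $N-1$ copies in place of $N$ gives $\expval{\tau^2}\sim\frac{1}{4(\ln(N-1))^2}$, and $\ln(N-1)/\ln N\to 1$ then aligns the leading constants with those of the lower bound. The sandwich yields $A_N \sim \frac{1}{8(\ln N)^2}$, closing both asymptotic equivalences at once.
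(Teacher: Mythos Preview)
Your proposal is correct and follows essentially the same route as the paper: the $J_\mathrm{TT}$ asymptotic via Theorem~\ref{thm:cost_TT} and \eqref{eq:expTET}, the upper bound on $A_N$ via $\TET\le\tau$ and the identification of $\tau$ with the $(N-1)$-agent exit time, and a lower bound drawn from the estimates in the proof of Theorem~\ref{thm:cost_ET}. The only cosmetic difference is that the paper invokes the \emph{statement} of Theorem~\ref{thm:cost_ET} directly for the lower bound (since $J_\mathrm{ET}>J_\mathrm{TT}(\expval{\TET})$ for large $N$ already yields $\liminf J_\mathrm{ET}\big/\tfrac{N(N-1)}{4\ln N}\ge 1$), whereas you re-extract the intermediate inequality \eqref{e13}; both are equivalent here.
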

\vspace*{\belowdisplayskip}
\begin{proof}
	Utilizing Theorem~\ref{thm:cost_TT} and plugging in \eqref{eq:expTET} shows the relation for $J_\mathrm{TT}(\expval{\TET})$. 
	
	The lower bound for $J_\mathrm{ET}$ follows from Theorem~\ref{thm:cost_ET}. 
	For the upper bound, observe that
	\begin{align*}
		\expval{\int_0^{\TET} v_1(t)^2 \diff t} &\leq \expval{\int_0^{\tau} v_1(t)^2 \diff t} = \frac{\expval{\tau^2}}{2} \\
		&\sim \frac{1}{2(2 \ln (N-1))^2} \sim \frac{1}{2(2 \ln N)^2},
	\end{align*}
	where we used the notation from the last proof and the fact that $\tau$ has the same distribution as $\TET$ for the dimension $N-1$.
	Utilizing $J_\mathrm{ET} = N(N-1)\mathbb{E}[\int_0^{\TET} v_1(t)^2 \diff t]/\mathbb{E}[\TET]$ together with \eqref{eq:expTET} yields the desired result.
\end{proof}

We have shown that the cost for both triggering schemes increases with the same order for large numbers of agents.
Note that this does not imply that the difference between $J_\mathrm{ET}$ and $J_\mathrm{TT}(\expval{\TET})$ vanishes for large $N$.
Moreover, the cost increase is caused purely by the cooperative nature of the considered setup.
To be more precise, we do not consider network effects in this analysis which could degrade the performance level of ETC even further as shown by \cite{Rabi2009} and \cite{Blind2013}.
Together with Theorem~\ref{thm:cost_ET}, we have therefore demonstrated that ETC schemes, although they are superior to TTC in the non-cooperative setup under the assumption of delay- and loss-free communication, might lose this property in a cooperative setup.
In addition, the relationship between TTC and ETC can depend on the number of agents or network participants $N$ in the cooperative case.

\section{Simulation}\label{sec:sim}

In this section, we support our theoretical findings by simulations.
We simulate the aforementioned MAS including the impulsive control law under the two triggering strategies.
This allows us to estimate the cost ratio $J_\mathrm{ET}/J_\mathrm{TT}$ for a varying number of agents $N$ and relate the two schemes in terms of performance.
In order to facilitate a fair comparison, we choose $\TTT=\expval{\TET}$ which results in the same average number of transmissions for both triggering schemes.

Note that the cost ratio is not influenced by the choice of $\Delta$ due to the scaling property of Brownian motions described in Fact~\ref{fact:scaling}.
Plugging the results from this fact into the cost ratio leads to a cancellation of $\delta$ from the fraction.

Thus, without loss of generality, we set $\Delta=1$ for the simulations.
We simulate the MAS for $N\in\lbrace2,12,22,\dots,72\rbrace$ with $10~000$ Monte Carlo runs each.
Moreover, we use a simulation step size of $10^{-4}$s in the utilized Euler-Maruyama method.
Since it is sufficient to analyze the first sampling interval for the cost, we can end a Monte Carlo run as soon as the first event time is reached.
The resulting cost ratios are shown in Fig.~\ref{fig:cost_ratio}.
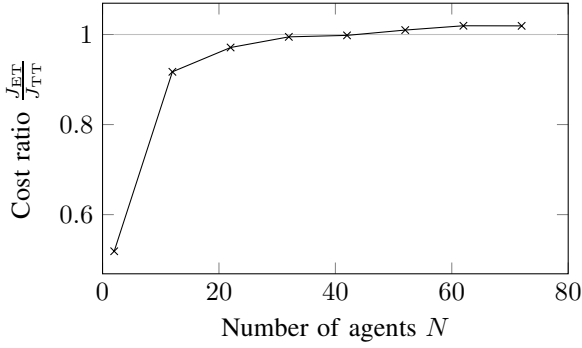
\begin{figure}
	\centering
	\begin{tikzpicture}
	\begin{axis}[
		xlabel=Number of agents $N$,
		ylabel=Cost ratio $\frac{J_\mathrm{ET}}{J_\mathrm{TT}}$,
		width=0.9*\linewidth,
		height=0.6*\linewidth,
		xmin=0,
		xmax=80
		]
		\addplot[mark=none, gray!50, samples=2, domain=2:78] {1};
		\addplot[mark=x,solid] table[x=noofagents,y=ratio,col sep=comma]{figs/data.csv};
	\end{axis}
\end{tikzpicture}
	\setlength{\abovecaptionskip}{0pt}
	\caption{Cost ratio of event- over time-triggered control.}
	\label{fig:cost_ratio}
\end{figure}

Although not guaranteed by our theoretical statements, we observe a clear performance advantage of the ETC scheme for low numbers of agents $N$.
Furthermore, the TTC scheme outperforms the ETC scheme for larger numbers of agents $N$.
In addition, it is interesting to note that $N_0$ from Theorem~\ref{thm:cost_ET} seems to lie in the magnitude of two-digit numbers.
This underscores that, for this particular setup, the found performance degradation of the ETC scheme can play a role for practically relevant numbers of agents $N\geq N_0$.

\section{Conclusion}\label{sec:conclusion}

In this work, we analyzed the performance of TTC and ETC in a MAS consensus setup with single-integrator agents and an all-to-all communication topology.
We proved for this particular setting that, in contrast to similar previously analyzed non-cooperative NCS setups, TTC outperforms ETC beyond a certain number of agents.
In addition, we derived the asymptotic order of the performance measure in the number of agents.
Note that the result holds without taking network effects into account.
Moreover, we demonstrated our findings in a numerical simulation.

This work highlights that ETC schemes might induce unforeseen additional effects in the MAS case when compared to non-cooperative setups.
These effects can lead to performance degradation as shown for the considered setting.
As an additional practical implication, special care must be taken when verifying ETC schemes for MAS since their superiority over TTC might critically depend on the number of agents.
A thoughtful consideration of the impact of the number of agents on such performance comparisons is of utmost importance when creating new triggering schemes and transferring knowledge from the non-cooperative NCS to the MAS domain.

In future work, we aim to include statements for more general topologies.
Moreover, we plan to incorporate network effects in the analysis.
Previous work in the non-cooperative NCS field has shown the importance of such considerations.
Since a main argument for the consideration of different triggering schemes is the operation under scarce communication resources, this is the natural next step.

\section*{APPENDIX}

\subsection{Proof of Fact~\ref{fact:0T}}
First, we compute as follows
\begingroup		%
\allowdisplaybreaks
\begin{align*}
	&\expval{\int_{0}^{M} \frac{1}{2}\sum_{i,j=1}^{N} \left(x_i(t)-x_j(t)\right)^2\diff t} \\
	={} &\frac{1}{2}\sum_{\substack{i,j=1:\\i\neq j}}^{N} \expval{\int_{0}^{M} \left(x_i(t)-x_j(t)\right)^2\diff t} \\
	={} &\frac{1}{2}\sum_{\substack{i,j=1:\\i\neq j}}^{N} \left(\expval{\sum_{k=1}^{m(M)}\int_{t_{k-1}}^{t_{k}}\left(x_i(t)-x_j(t)\right)^2\diff t}\right.\\
	&\left. + \expval{\int_{t_{m(M)}}^{M}\left(x_i(t)-x_j(t)\right)^2\diff t}\right),
\end{align*}
\endgroup
where $(m(M))_{M\in\lbrack0,\infty)}$ is a renewal process for the renewal time series $(t_k)_{k\in\mathbb{N}}$.
Since the sequence of inter-event times is independent and identically distributed, the quantities
\begin{equation*}
	y_k^{(i,j)} \coloneqq \int_{t_{k-1}}^{t_{k}}\left(x_i(t)-x_j(t)\right)^2\diff t
\end{equation*}
are i.i.d.
To see this, let $\bar{v}_i(t) = v_i(t) - v_i(t_k)$ for all $t\in\lbrack t_k,t_{k+1})$ and $i\in\lbrace1,\dots,N\rbrace$.
Note that $x_i(t) = x_i(t_k) + \bar{v}_i(t)$ for all $t\in\lbrack t_k,t_{k+1})$, $i\in\lbrace1,\dots,N\rbrace$ and $x_i(t_k)=x_j(t_k)$ for all $i,j\in\lbrace1,\dots,N\rbrace$.

By Wald's equation, we have $\mathbb{E}[\sum_{k=1}^{m(M)} y_k^{(i,j)}] = \expval{m(M)} \mathbb{E}[y_1^{(i,j)}]$.
Furthermore, the second term can be upper bounded by
\begin{equation*}
	\int_{t_{m(M)}}^{M}\left(x_i(t)-x_j(t)\right)^2\diff t \leq y_{m(M)+1}^{(i,j)}.
\end{equation*}
Dividing by $M$ and letting $M\to \infty$ shows that
\begin{align*}
	J &= \frac{1}{2}\sum_{\substack{i,j=1:\\i\neq j}}^{N} \lim_{M\to \infty} \frac{\expval{m(M)}}{M}\cdot\expval{y_1^{(i,j)}} \\
	&= \frac{1}{\expval{T}} \cdot \frac{1}{2}\sum_{i,j=1}^{N} \expval{\int_{0}^{T} \left(x_i(t)-x_j(t)\right)^2\diff t},
\end{align*}
since, by the renewal theorem, $\lim_{M\to \infty} \frac{\expval{m(M)}}{M} = \frac{1}{\expval{T}}$.

\subsection{Proof of Fact~\ref{fact:scaling}}
Let us show the first equality.
Indeed,
\begingroup		%
\allowdisplaybreaks
\begin{align*}
	&\frac{Q_\mathrm{ET}(\Delta=\delta)}{N(N-1)} \\
	={} &\expval{\int_0^{\TET} v_1(s)^2 \diff s}
	\\
	={} &\expval{\int_0^{\inf\{ t>0 \mid \exists k : \lvert v_k(\delta^2 t / \delta^2)\rvert = \delta\}} v_1(\delta^2 s/\delta^2)^2 \diff s}
	\\
	={} &\expval{\int_0^{\inf\{ t>0 \mid \exists k : \delta\lvert v_k( t / \delta^2)\rvert = \delta\}} \delta^2 v_1(s/\delta^2)^2 \diff s}
	\\
	={} &\delta^2 \expval{\int_0^{\delta^{-2} \inf\{ \delta^2 t' >0 \mid \exists k : \lvert v_k( t')\rvert = 1\}} v_1(s')^2 \delta^2 \diff s'}
	\\
	={} &\delta^4 \expval{\int_0^{\inf\{ t' >0 \mid \exists k : \lvert v_k( t')\rvert = 1\}}  v_1(s')^2 \diff s'}
	\\
	={} &\delta^4 \frac{Q_\infty(\Delta=1)}{N(N-1)}.
\end{align*}
\endgroup
Here, we utilized the scaling property of Brownian motions in the third step and linear integral substitution in the fourth step.
All other formulas are proved in a very similar fashion.

\bibliographystyle{IEEEtran}
\bibliography{IEEEabrv,references}

\end{document}